\newtheorem{theorem}{Theorem}
\newtheorem{lemma}[theorem]{Lemma}
\theoremstyle{definition}
\theoremstyle{remark}
\begin{document}

%
\title{
Analysis of RF Energy Harvesting in Uplink-NOMA IoT-based Network
}

\author{
   \IEEEauthorblockN{Zhou Ni\IEEEauthorrefmark{1}, Ziru Chen\IEEEauthorrefmark{1}, Qinbo Zhang\IEEEauthorrefmark{1}, Chi Zhou\IEEEauthorrefmark{1}}

     \IEEEauthorblockA{\IEEEauthorrefmark{1}Department of Electrical and Computer Engineering, Illinois Institute of Technology, Chicago, USA }
     

 \IEEEauthorblockA{Emails: \{zni1,zchen71,qzhang63\}@hawk.iit.edu}, zhou@iit.edu
}

\maketitle

\begin{abstract}
Internet of Things (IoT) systems in general consist of a lot of devices with massive connectivity. Those devices are usually constrained with limited energy supply and can only operate at low power and low rate. In this paper, we investigate a cellular-based IoT system combined with energy harvesting and NOMA. We consider all base stations (BS) and IoT devices follow the Poisson Point Process (PPP) distribution in a given area. The unit time slot is divided into two phases, energy harvesting phase in downlink (DL) and data transmission phase in uplink (uplink). That is, IoT devices will first harvest energy from all BS transmissions and then use the harvested energy to do the NOMA information transmission. We define an energy harvesting circle within which all IoT devices can harvest enough energy for NOMA transmission. The design objective is to maximize the total throughput in uplink within the circle by varying the duration $T$ of energy harvesting phase. In our work, we also consider the inter-cell interference in the throughput calculation. The analysis of Probability Mass Function (PMF) for IoT devices in the energy harvesting circle is also compared with simulation results. It is shown that the BS density needs to be carefully set so that the IoT devices in the energy harvesting circle receive relatively smaller interference and energy circles overlap only with a small probability. Our simulations show that there exists an optimal $T$ to achieve the maximum throughput. When the BSs are densely deployed consequently the total throughput will decrease because of the interference.

\end{abstract}

\begin{IEEEkeywords}
 NOMA, RF Energy harvesting, stochastic geometry, ultra-dense network, network throughput analysis.
\end{IEEEkeywords}

\section{Introduction}
With the development of 5G technology, it is possible to transmit more data and support more users (UEs) in nowadays wireless communication. Non-orthogonal Multiple Access (NOMA) is a key technology in 5G and has been extensively studied. Allowing serval data sources to be transmitted at the same frequency band and the same time slot is one of the main advantages of NOMA. 
With the growing demand for communication spectrum, NOMA technology has been widely considered as a powerful way to improve the spectrum efficiency in the future mobile communication system. 
Since it was proposed in 2013, numbers of NOMA technologies have been come up for the mobile communication network. It can be roughly divided into downlink (DL) NOMA and uplink (UL) NOMA. 

DL NOMA has been widely studied in recent years.
A paring UEs are usually considered in a basic NOMA system. For the transmitter, it uses non-orthogonal transmission which combines all the NOMA UEs information in the DL signal. At the receiver side, Successive Interference Cancellation (SIC) is applied when UEs receive the signal. SIC relies on decoding and subtracting the signals in sequence until it reaches its desired signal~\cite{manglayev2017noma}. Therefore, the decoding order must match with the UE index in the cancellation sequence to get the right information. Zekun. et al. proposed a framework on
the NOMA system coverage and average UE achievable data rate in a stochastic geometry-based NOMA system~\cite{zhang2016stochastic}.
in~\cite{zhang2016stochastic} where a multi-cell DL cellular network is considered in its system model. In each cell,
there are two UEs connected to the BS. All BSs are set to transmit a fixed power $P$ . In addition, the inter-cell interference is considered in this paper and can be calculated by using the Laplace transform. According to the result of this paper, the NOMA has a negative influence on the performance of UE SIR. However, NOMA improves the overall system throughput. Some power allocation schemes for DL NOMA are proposed in~\cite{wang2016power}, it investigates power allocation for a DL NOMA system. In their system model, single BS and two UEs are considered as well. There are two closed-form optimal power allocation schemes derived by using Karush-Kuhn-Tucher (KKT) conditions. Moreover, a Poisson cellular network applying NOMA in the DL is studied in~\cite{ali2018analyzing}. In the system model, all the NOMA UEs are distributed randomly in the largest disk centered at the BS in the cell.

Previous works for NOMA mainly concentrates on DL NOMA. The investigation for UL NOMA is relatively fewer. For UL NOMA, the eNB has to receive different arrived power from all NOMA UEs. How to obtain power is the main issue. In~\cite{zhang2016uplink}, it comes up a power back-off scheme and investigates the performance of outage probability for two NOMA UEs in a single cell transmission case. In addition, user pairing in NOMA is studied in many predefined power allocation schemes. A set of UEs are divided by BS into disjunct pairs and sent the available power to these pairs. Different scenarios, such as SAMS, MBASS, are considered in~\cite{sedaghat2018user}. In~\cite{ali2018downlink}, the authors proposed three models for the UEs in Poisson Cluster process (PCP) system which gives the idea of the range. The UEs in this work is considered to be distributed in a certain range in the three systems and analyzed. In addition, all the NOMA UEs are connected with their nearest BS.

Although NOMA has many advantages for 5G, challenges and obstacles exist when NOMA is applied into practice. For example, when the number of UE in a cell becomes very large, the complexity of decoding will increase as well. We know that each of the UEs in the NOMA cluster needs to decode all the information they received even one having worst channel condition. Therefore, when the system has a large number of UEs, the complexity and power consumption will be significantly higher. Moreover, all other UEs decoding information will be erroneous if any error occurs in single UE due to SIC. This constrains the number of UEs to be served in NOMA at the receiver. Fortunately, more and more new techniques are applied to solve those problems. NOMA is combined with many other systems or mathematics tools, such as machine learning, nowadays. 

In IoT networks, wireless charging has been proposed recently as a promising solution to supply power to a large number of users\cite{8726578,chen2017optimal}.
In~\cite{diamantoulakis2016wireless}, it combines energy harvesting with a NOMA system. The energy harvesting process is studied because IoT devices and nodes are inconvenient to be removed and charged in wireless IoT networks. For example, some nodes are put underground to collect information about the soil or waves. Therefore, energy harvesting technology gives the self-sustainable ability to these IoT nodes. With the development of these new technologies, we will have a more reliable and powerful wireless communication system in the foreseeable future. 
Wireless powered ultra-dense networks (UDNs) has shown its potential to be a good candidate of wireless RF energy harvesting. In~\cite{li2018hybrid,he2013recursive,chen2017energy}, the RF energy have been harvested from the environment in the UDNs, in which the distance between the BSs and users are significantly reduced.
UDNs could not only improve the link qualities but also increase the harvested RF energy levels. However, increasing intercell interference may influence network performance. Hence, the density of BSs needs to be carefully designed.


In addition, Stochastic geometry (SG) has been widely adopted to characterize the random deployment of BSs and wireless users for network performance analysis, which can be employed to quantify the co-channel interference~\cite{zhang2014stochastic} and also be incorporated with random channel effects such as fading and shadowing~\cite{keeler2013sinr}.

In this paper, we modeled an RF-energy harvesting uplink-NOMA IoT-based Network by using Poisson point process. The main contribution is in three-fold. First, we propose an energy-efficient NOMA system. Specifically, UEs in a ``selection range'' will transmit information to BS by using uplink NOMA. ``Selection range'' is defined as a circle in which UEs can always harvest enough energy~\footnote{If UEs are not located in this ``selection range'', we assume they will use other sub-channel to transmit (OFDMA). So in this paper, we only focus on the users located in the ``selection range''.}. Second, we
analyze the user selection scheme and average system throughput numerically. Third, extensive
simulations are conducted, in which we can find some useful insight into system design.

The remainder of this paper is organized as follows. In Section~\ref{System.model}, the variables and assumptions governing the model are listed. In Section~\ref{ITA}, user selection strategy has been analyzed. In Section~\ref{Simulation}, simulation results have been presented. Finally, Section~\ref{Conclusion} briefly concludes the article and proposes several future research direction. 

\begin{figure}[t]
\centering
\includegraphics[width=0.45\textwidth]{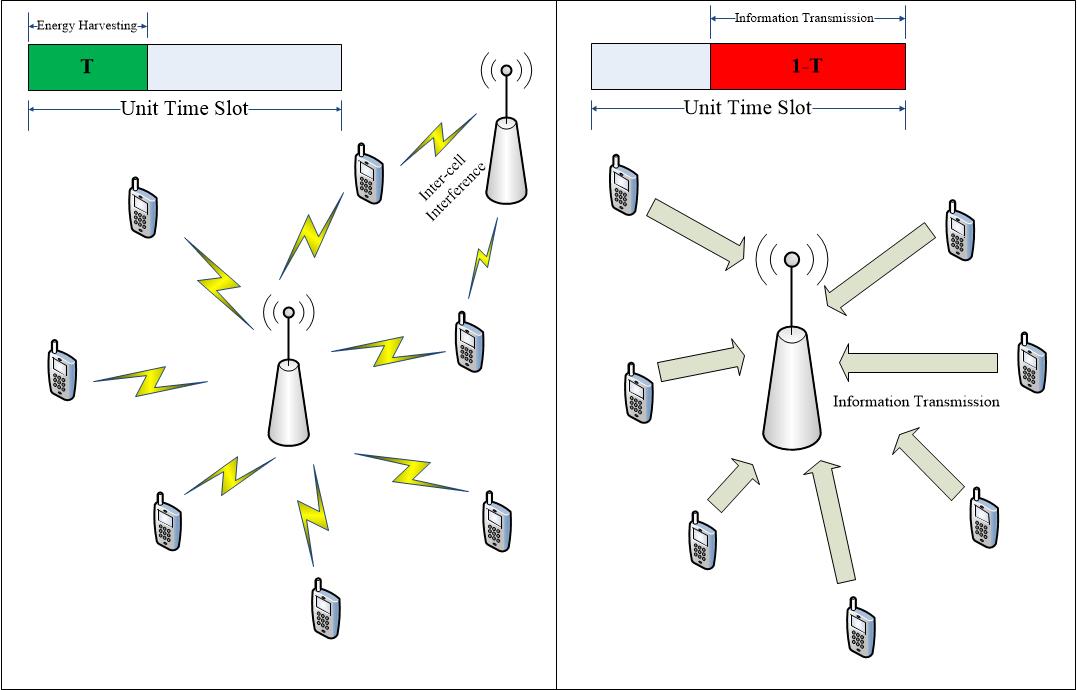}
\caption{ Illustration of the system setup and the two sub-slots (Energy harvest, uplink). }
\label{Fig}
\end{figure}

\section{System Model}\label{System.model}
An ultra-dense cellular network is considered where BSs and IoT devices are distributed following two independent homogeneous Poisson point process (PPP).
The locations of BSs and IoT devices are modeled on the Euclidean plane $\mathbb{R}^2$ as $\Phi_B={b_i, i=1,2,...}$ and $\Phi_U={u_i,i=1,2,....}$ with density $\lambda_B$, and $\lambda_U$, respectively.
Without loss of generality, each IoT device associates the nearest BS, in other words, the coverage area of each BS comprises a Voronoi cell.
IoT devices harvest RF energy from surrounded active BSs and use those energies to do downlink information reception and uplink information transmission. 
BSs and IoT devices are both equipped with one single antenna and each of them operates over the same frequency band.

 As shown in Fig.\ref{Fig}, for each transmission, each time slot has been separated into two sub-slots. For the first sub-slot, all UEs located inside of the selection range will harvest energy from all BSs in the whole system~\footnote{'To simplify the presentation, all `UEs' used in the following paper is represent the UEs located in the selection range'}. Then, they use the energy that they harvested to do the uplink NOMA information transmission in the transmission sub-slot. Especially, for the energy harvesting part, a single UE can harvest energy from all BSs in the whole system and will be selected by its associated BS to apply uplink NOMA transmission scheme if its harvested sufficient energy in energy harvesting sub-slot.
 Specifically, we consider the power domain NOMA system and apply SIC to cancel all intra-cell interference. By applying SIC, when BS receives the signal from all UEs in its cell, the BS first detects and decodes the signal which is the strongest one and extracts this signal from the combined received signal and does the same process until all signals are detected and decoded. We assume the SIC can be a success for all the selected users and all users always have packages to transmit.

Both small-scale fading and large-scale fading have been considered in our paper. Specifically, rayleigh fading is considered when it comes to small scale fading, and each sub-slot has different fading factors $h_x, g_x\sim \exp(1)$. $h_x$ and $g_x$ are denotes the small scale fading for energy harvesting sub-slot and UL transmission sub-slot respectively.
Besides, large-scale fading is considered decays at the rate of $r^{-\alpha}$, where $r$ is the distance between UEs and BSs, and $\alpha>2$ is the path loss exponent.

When we consider a typical UE located at the origin, the received RF energy of a typical UE at energy harvesting sub time-slot is thus
\begin{align}\label{sys.eh_d}
    E_{\mathrm{HD,j}} = T a P_{bs} h_{1,1}r_1^{-\alpha}+\sum_{b_i\in\Phi_{B}^\prime} T a{P_{bs} h_{i,j}r_{i}^{-\alpha}},
\end{align}
where $T$ is the energy harvesting sub-slot fraction, $a \in (0,1)$ is the energy conversion efficiency from RF to DC, $h_i$ denotes the channel gain between the typical IoT device to any BS $b_i \in \Phi_B^\prime$ where $b_i$ is the $i$-th nearest BS to the typical IoT device and $\Phi_B^\prime$ represents BSs beyond the nearest one who are operating in the energy harvesting or downlink transmission sub-frame. $r_i$ is the distance between the typical IoT device and its $i$-th nearest BS.

\section{User Selection and System Throughput Analysis}\label{ITA}
In this section, we study the UEs selection scheme and system throughput. For a UE to successfully selected by its associated BS to operate uplink NOMA, it has to harvest sufficient energy at energy harvesting sub-slot to operate the uplink transmission. If conditioned on the distance between the UE to its associated BS, we define an energy harvesting range in which UEs can harvest sufficient RF energy with probability $\beta$, $\beta$ is the power coverage requirement. The radius for this range denoted as $r$ is in the following lemma,
\begin{lemma}\label{lm.EK}
If UEs in the selection range can receive enough RF energy to proceed uplink transmission for uplink information transmission sub-slot, they will upload information using NOMA technology. The radius of selection range is given by:
\begin{align}
r = \sqrt{\left(\ln(\beta)+\frac{E_{\mathrm{th}}}{TaP_S}\right)\frac{\alpha-2}{2\pi\lambda_B}},\label{lemma1}
\end{align}
where $E_\mathrm{th}$ is the energy consumption threshold.
\end{lemma}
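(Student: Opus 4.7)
The goal is to compute $\Pr\bigl(E_{\mathrm{HD},j}\ge E_{\mathrm{th}}\mid r_1=r\bigr)$, set it equal to $\beta$, and invert for $r$. I would start from \eqref{sys.eh_d} and condition on the distance $r$ from the typical UE to its nearest (associated) BS, then split $E_{\mathrm{HD},j}$ into two independent parts: the nearest-BS contribution $TaP_{bs}h_{1,1}r^{-\alpha}$, which has an exponential law because of Rayleigh fading, and the aggregate $Z=\sum_{b_i\in\Phi_B'}TaP_{bs}h_{i,j}r_i^{-\alpha}$ over the remainder PPP restricted to $\mathbb{R}^2\setminus B(0,r)$.

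First, I would exploit that $h_{1,1}\sim\exp(1)$: conditional on $Z$, the coverage event reduces to a one-sided tail of an exponential, yielding $\exp\bigl(-(E_{\mathrm{th}}-Z)\,r^{\alpha}/(TaP_{bs})\bigr)$ whenever the argument is non-positive. Next, I would average over $Z$ using the PGFL and Campbell's theorem for the exterior PPP; the key radial integral $\int_{r}^{\infty} s^{1-\alpha}\,ds$ converges because $\alpha>2$ and contributes $E[Z]=TaP_{bs}\cdot 2\pi\lambda_B r^{2-\alpha}/(\alpha-2)$. Substituting this mean for $Z$ inside the exponent (dominant/mean-field reduction) collapses the conditional coverage to $\exp\bigl(-E_{\mathrm{th}}\,r^{\alpha}/(TaP_S)+2\pi\lambda_B r^{2}/(\alpha-2)\bigr)$.

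The last step is to set this expression equal to $\beta$, take logarithms, isolate $r^{2}$, and rearrange so that the $\ln\beta$ term and the path-loss/energy term $E_{\mathrm{th}}/(TaP_S)$ land on the same side of the equation, which gives exactly \eqref{lemma1}.

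The principal obstacle is that $Z$ is a stable-type shot-noise whose moment-generating function diverges at positive arguments, so one cannot rigorously exchange the exponential with the expectation over $Z$. The clean form of \eqref{lemma1} thus relies on the mean-field replacement $Z\approx E[Z]$ (standard in stochastic-geometry energy-harvesting analyses), together with the normalization that lets the $r^{\alpha}$ prefactor of the nearest-BS term be absorbed so that only the leading $r^{2}$ term survives on the left-hand side; making these approximations explicit, and verifying that $\ln\beta+E_{\mathrm{th}}/(TaP_S)>0$ so that the square root in \eqref{lemma1} is well defined, is the step I expect to require the most care.
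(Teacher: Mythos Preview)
Your plan matches the paper's argument almost step for step: condition on the nearest-BS distance, replace the aggregate contribution from the remaining PPP by its Campbell mean $2\pi\lambda_B r^{2-\alpha}/(\alpha-2)$, use the exponential tail of $h_1$, set the result equal to $\beta$, and invert. The only cosmetic difference is ordering---the paper substitutes the mean for the sum \emph{before} isolating $h_1$, so it never meets the divergent MGF you flag---and you are actually more explicit than the paper about the two unrigorous moves (the mean-field replacement and the silent absorption of the $r^{\alpha}$ prefactor), both of which the paper uses without comment.
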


\begin{proof}
Since the selection of NOMA UEs is based on the energy harvesting status, the energy harvesting sufficient probability condition on the distance between the typical UE to its associated BS $r_1$ is give by,
\begin{align}
 &\hspace{4pt} \mathbb{P}\left(E_\mathrm{H} \geq E_{\mathrm{th}} \bigg |R=r_1 \right) \nonumber
\hspace{-3pt}\\
=
& \mathbb{P} \left(aP_S h_1 r_1^{-\alpha}+\sum_{b_i\in\Phi_B/b_1} aP_S h_i r_i^{-\alpha} \geq \frac{E_\mathrm{th}}{T}\bigg| R=r_1\right), \nonumber \\
\hspace{-3pt} {=} & \mathbb{P}\left(h_1 r_1^{-\alpha}+\mathbb{E}\left[\sum_{b_i\in\Phi_B/b_1}h_i r_i \geq \frac{E_{\mathrm{th}}}{TaP_S} \right]  \bigg | R=r_1 \right), \nonumber \\
 \hspace{-3pt} \overset{(a)}{\approx}&  \mathbb{P}\left(h_1\geq  \frac{E_{\mathrm{th}}r_1^\alpha}{TaP_S}-\frac{2\pi\lambda_B r_1^{2}}{\alpha-2} \bigg| R=r_1\right)\nonumber \\
 \hspace{-3pt} = &\exp\left(-\frac{E_{\mathrm{th}}}{TaP_S}+\frac{2\pi\lambda_Br_1^2}{\alpha-2}\right),\label{EH}
\end{align}
where in equation (a) we can approximate as:
\begin{align} 
&\mathbb{E}_{\Phi_B}\left[\sum_{b_i \in \Phi_{B}/b_{1}}  h_i r_{i}^{-\alpha}  \bigg |R=r_{1}\right] \nonumber\\
{=} &\mathbb{E}_{\Phi_B}\left[\sum_{b_i \in \Phi_{B}/b_{1}}  r_{i}^{-\alpha} \bigg| R=r_{1}\right], \nonumber \\
 \overset{(b)}{=} &2\pi \lambda_B\int_{r_1}^{\infty}\frac{1}{r^\alpha}rdr, \nonumber \\
 = &\frac{2\pi  \lambda_B r_1^{2-\alpha}}{\alpha-2}, \nonumber
\end{align}
where (b) comes from the fact that Campbell's theorem for summation over PPP~\cite{kishk2017joint}. 

If we put \eqref{EH} equals to $\beta$, and calculate $r_1$, lemma 1 proved.
\end{proof}

After select UEs, we are interested in the total throughput after SIC. 

All UEs use the harvested energy to upload informations into its associated BS by using power $P_u = E_\mathrm{th}/(1-T)$.
If we use $N$ to denotes the number of UEs been selected by typical BS, the data rate for the $m$th UE ($m\leq N$) conditioned on $N$ can be written as:
\begin{align}
    R_m &= (1-T)\log_2\left(1+\frac{S_m}{I_{inter}+\sum_{i=m+1}^N S_i+\sigma}\right)\nonumber\\
    &=(1-T)\log_2\left(\frac{I_{inter}+\sum_{i=m}^N S_i+\sigma}{I_{inter}+\sum_{i=m+1}^N S_i+\sigma}\right),\nonumber
\end{align}
where $S_i = P_u g_i r_i^{-\alpha}$ represent the received signal at the BS side and $I_{inter}$ is the inter-cell interference. Moreover, $\sigma$ represents the gaussian white noise.

The average total throughput for each cell denoted as $R_{tc}$ is given by:
\begin{align}
    R_{tc} &= \mathbb{E}_N\left[(1-T)\frac{\sum_{i=1}^NS_i}{I_{inter}+\sigma}\right]\nonumber\\
    &=\sum_{n=1}^{\infty}\left[(1-T)\frac{\sum_{i=1}^nS_i}{I_{inter}+\sigma}\mathbb{P}(N=n)\right],
\end{align}
where the probability mess function (PMF) of $N$ can be written as:
\begin{align}\label{pn}
    \mathbb{P}(N=n) = \frac{\lambda_u\pi (r)^2}{n!}e^{\lambda_u\pi (r)^2},
\end{align}
where $r$ is given in \eqref{lemma1}.

System average data rate denoted as $R_{ts}$ is given by:
\begin{align}\label{rts}
    R_{ts} = \lambda_B^\prime R_{tc},
\end{align} 
where $\lambda_B^\prime$ is the density of BSs receiving uplink NOMA signals, it can be written as:
\begin{align}
    \lambda_b^\prime = \lambda_b*(1-\mathbb{P}(N=0)).
\end{align}

Therefore, we can calculate the total data rate of the system.

\section{Simulation and Results}\label{Simulation}
The results were obtained using the parameters in Table I.
We modeled the multi-UEs and BSs as Poisson points and calculated the total data rate for the system. In addition, we used 50,000 samples to average every results in this simulation.
\begin{table}[t]\label{sim}
\centering
\caption{Parameters in simulation}
\setlength{\tabcolsep}{9mm}{
\begin{tabular}{l|l}
\hline
\hline
\textbf{Parameters} & \textbf{Values} \\
\hline
 $\lambda_b$ &   $20$ to $40$ BS/$km^2$ \\
 \hline
  $P_S$& $1$J\\
  \hline
   $\alpha$& 4 \\
\hline
   $E_{\mathrm{th}}$& $10^{-4}$J\\
  \hline
   $a$  &$0.5$\\ 
   \hline
   $\beta$&$99\%$\\ 
\hline         
\end{tabular}}
\end{table}

The relationship between $r$ and $\lambda_b$ is shown in Fig.~\ref{LambdaBwithR}. It is easy to find that a higher $T$ will achieve a larger energy
harvesting circle. This is because the system has a long energy harvesting time which more UEs can get enough energy for uplink transmission even those UEs far from their connected BS. Meanwhile, with higher BS density which means the distance between a UE to all BSs are closer. UE also can harvest more energy. The system can support more NOMA UEs and the range becomes larger.
\begin{figure}[hthp]
\centering
\includegraphics[width=0.53\textwidth]{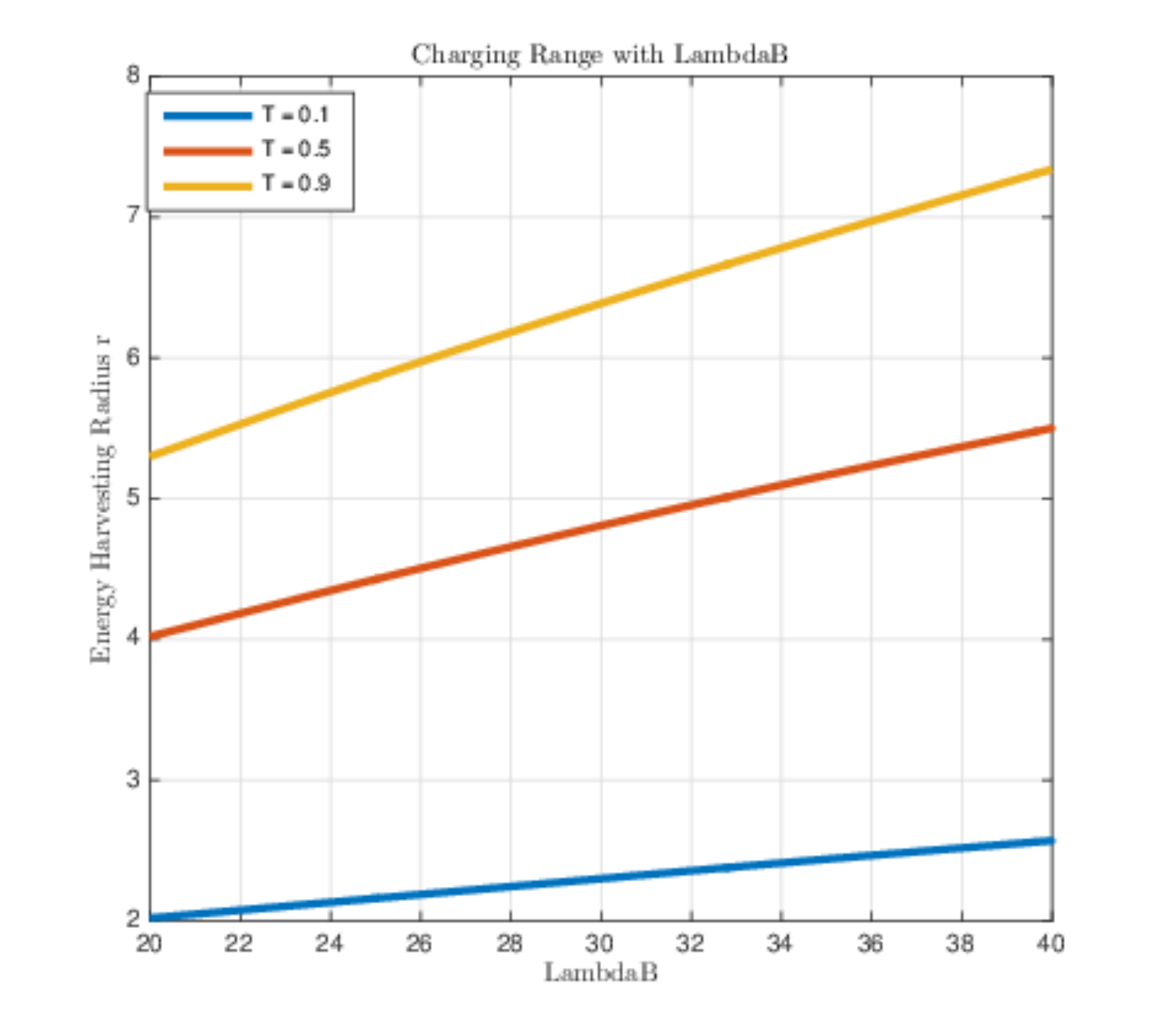}
\caption{Relation between $r$ and $\lambda_b$. }
\label{LambdaBwithR}
\end{figure}

In Fig.\ref{PMFofUE}, we show the accuracy of our equation \eqref{pn} who gives the PMF of how many UEs in the circle. Then we need to compare the analysis result with simulation result to make sure that they can match with each other. In this simulation, we randomly distribute the BSs and UEs following the PPP. Then one typical cell is extracted from all the cells to present other cells. Therefore, we need to consider the UE distribution probability in the typical cell and compare it with mathematical analysis result.
We can imagine that when we have a large energy harvesting range, the overlap probability of these circles will significantly increase. When $\lambda_b$ reaches up to $300 bs/km^2$, the overlap probability is $100\%$, which means most UEs in the same energy harvesting circle but they belong to different cells.  In the analysis, it just calculates how many UEs in this range regardless of whether these UEs are located in the same cell. But in simulation, we just calculate UEs in the same cell. Therefore, the error comes when $\lambda_b$ becomes larger. This is the reason why we chose a certain number of $\lambda_b$ which is $20$ to $40$ in our work.

\begin{figure}[hthp]
\centering
\includegraphics[width=0.53\textwidth]{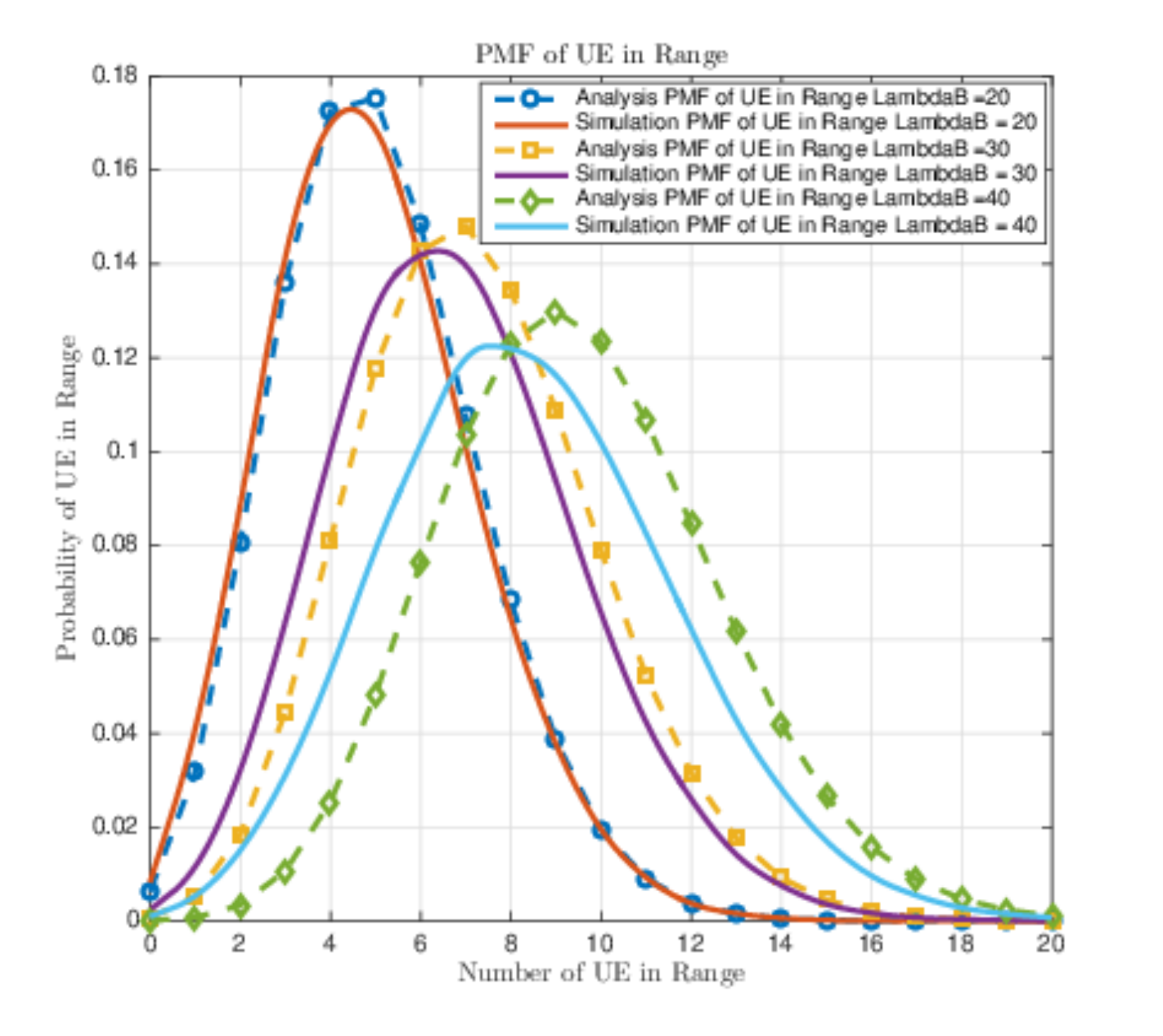}
\caption{PMF of UEs in circle. }
\label{PMFofUE}
\end{figure}

The system throughput with changed $T$ can be calculated according to equation~\eqref{rts}. From this equation, we can find that the total data rate of the whole system is related to the information transmission time and inter-cell interference. In Fig.\ref{throughputWithTau}, we show that with same energy harvesting time $T$, smaller $\lambda_b$ can reach up to a higher system throughput compared with larger BS density. This is because the smaller $\lambda_b$ introduced less uplink interference.  As for the fixed $\lambda_b$ but different $T$, the throughput will increase with a short energy harvesting time and $T = 0.15$ gives the maximum throughput of the system. When the harvesting time becomes larger than this optimal value, the throughput becomes decrease because larger $T$ also achieves a larger $r$ which means high inter-cell interference. The reason behind this is the tradeoff of energy harvesting sub time-slot and transmission sub time-slot.
\begin{figure}[t]
\centering
\includegraphics[width=0.53\textwidth]{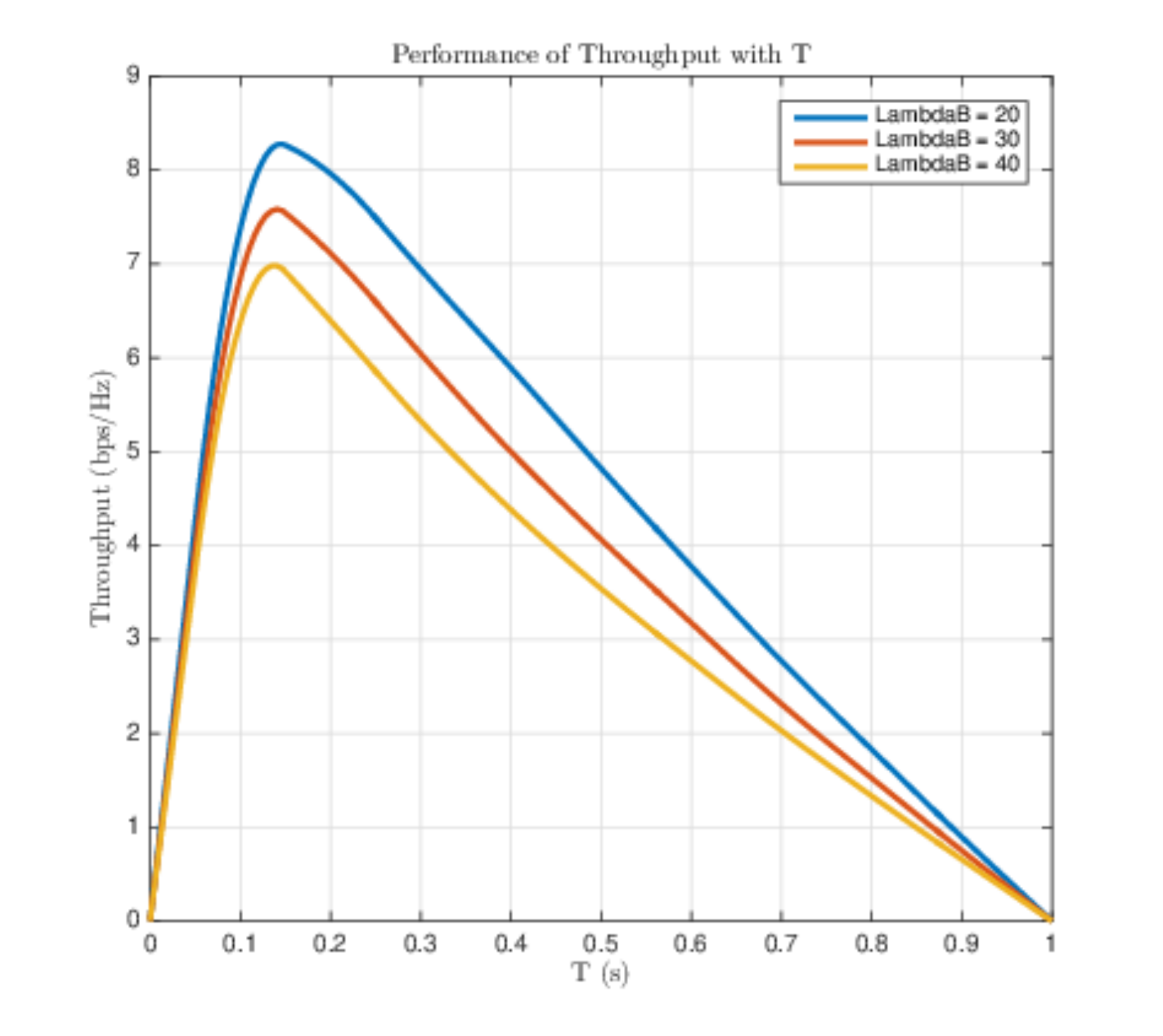}
\caption{System throughput with changed 𝑇. }
\label{throughputWithTau}
\end{figure}

In many traditional NOMA works who just use a single cell to present the whole system. To get optimal system throughout, most of them ignored the inter-cell interference. However, it was unpractical when multi-BSs and UEs in the system. The interference will become too high to be ignored. In Fig.\ref{duibiinterference}, we find out that inter-cell interference seriously affected the system throughout.
\begin{figure}[hthp]
\centering
\includegraphics[width=0.53\textwidth]{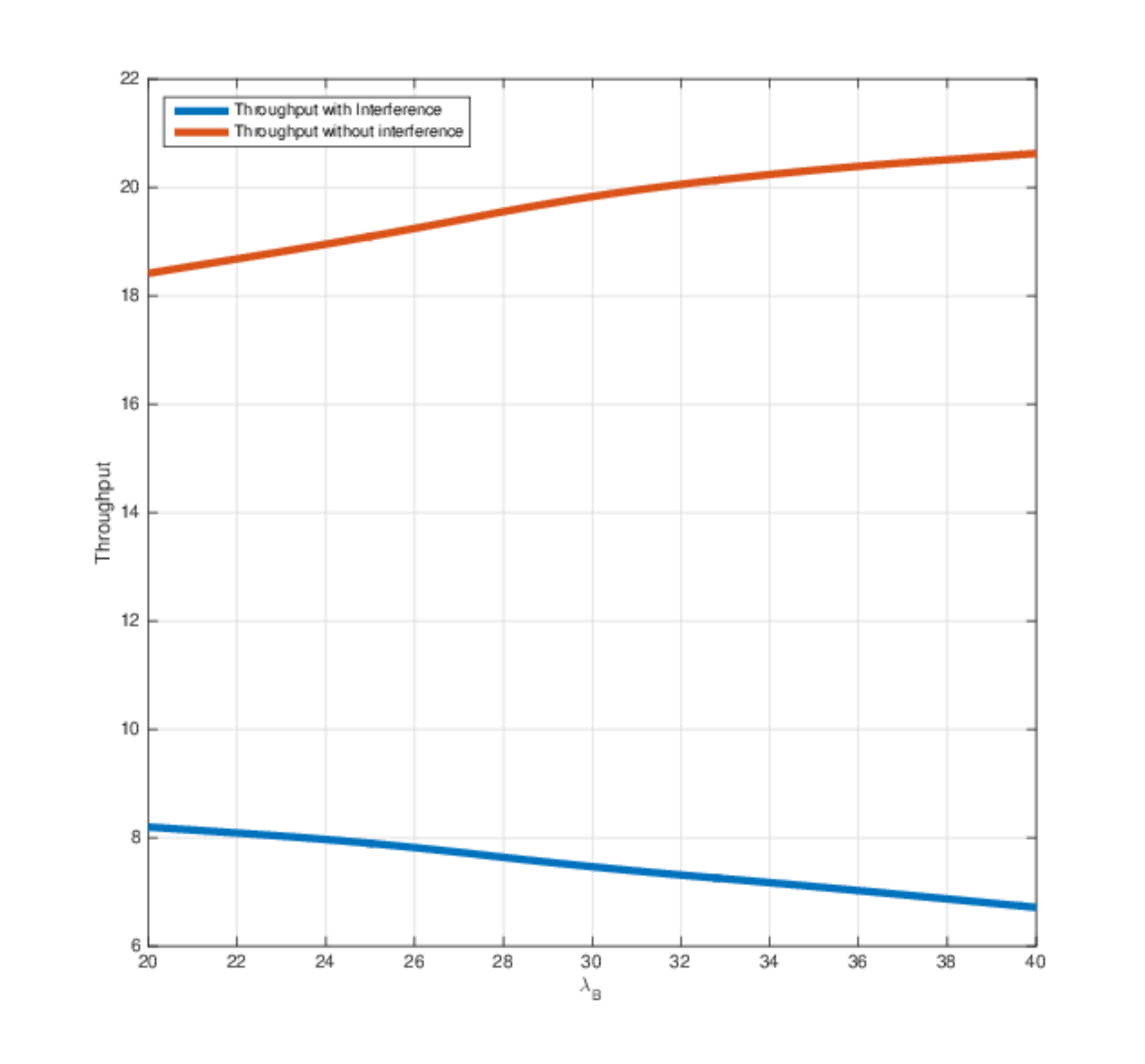}
\caption{Compare throughput with and without inter-cell interference. }
\label{duibiinterference}
\end{figure}

\section{Conclusion}\label{Conclusion}
In this paper, we have provided a new model for uplink NOMA transmission in a PPP system and developed an analytical framework to study uplink NOMA performance of a cellular-based ambient RF energy harvesting network in which IoT devices are solely powered by the downlink cellular transmissions. We proposed a multi-UEs and multi-BSs scenario with inter-cell interference which gives a model for a general communication network. Based on the more realistic network, optimal slot partitioning that maximizes this throughput under different density of BSs have been discussed. Simulation results show that inter-cell interference has a dramatic influence for a stochastic geometry system.

We will extend the work by considering imperfect SIC in our future work.

\vspace{16pt}
\bibliographystyle{IEEEtran}
\bibliography{IEEEfull,new}

\end{document}